\newtheorem{theorem}{Theorem}
\definecolor{Thistle}{rgb}{0.85, 0.75, 0.85}
\definecolor{Dandelion}{rgb}{0.94, 0.88, 0.19}
\definecolor{CadetBlue}{rgb}{0.37, 0.62, 0.63}
\definecolor{Aquamarine}{rgb}{0.5, 1.0, 0.83}
\newcommand\copyrighttext{%
  \footnotesize \textcopyright~2025 IEEE. Personal use of this material is permitted.
  Permission from IEEE must be obtained for all other uses, in any current or future media,
  including reprinting/republishing this material for advertising or promotional purposes,
  creating new collective works, for resale or redistribution to servers or lists,
  or reuse of any copyrighted component of this work in other works.
}
\newcommand\copyrightnotice{%
  \begin{tikzpicture}[remember picture,overlay]
    \node[anchor=south,yshift=10pt] at (current page.south) {%
      \fbox{%
        \parbox{\dimexpr\textwidth-2\fboxsep-2\fboxrule\relax}{%
          \copyrighttext
        }%
      }%
    };
  \end{tikzpicture}%
}
\def\BibTeX{{\rm B\kern-.05em{\sc i\kern-.025em b}\kern-.08em
    T\kern-.1667em\lower.7ex\hbox{E}\kern-.125emX}}
\begin{document}

\title{
Bidirectional Age of Incorrect Information: \\A Performance Metric for Status Updates in\\ Virtual Dynamic Environments
\thanks{This work has been supported by the Italian PRIN project 2022PNRR ``DIGIT4CIRCLE,'' project code P2022788KK.}
}


\author{\IEEEauthorblockN{Chiara Schiavo, Manuele Favero, Alessandro Buratto and Leonardo Badia}
\IEEEauthorblockA{Dept. of Information Engineering, University of Padova, via Gradenigo 6/b, 35131 Padua, Italy\\}
Emails: \small{\texttt{\{schiavochi, faveromanu, burattoale, badia\}@dei.unipd.it}}
}


\maketitle

\begin{abstract}
Virtual dynamic environments (VDEs) such as the Metaverse and digital twins (DTs) require proper representation of the interacting entities to map their characteristics within the simulated or augmented space. Keeping these representations accurate and up-to-date is crucial for seamless interaction and system reliability. 
In this paper, we propose bidirectional age of incorrect information (BAoII) to address this aspect. BAoII quantifies the time-dependent penalty paid by an entity in a VDE due to incorrect or outdated knowledge about itself and the overall dynamically changing space. This extends the concept of age of incorrect information for a bidirectional information exchange, capturing that a VDE requires mutual awareness of the entity's own representation, measured in the virtual space, and what the other entities share about their representations. 
Using a continuous-time Markov chain model, we derive a closed-form expression for long-term BAoII and identify a transmission cost threshold for optimal update strategies.
We describe a trade-off between communication cost and information freshness and validate our model through numerical simulations,
demonstrating the impact of BAoII on evaluating system performance and highlighting its relevance for real-time collaboration in the Metaverse and DTs. 
\end{abstract}

\begin{IEEEkeywords}
Age of Information, Metaverse, Digital Twin, Virtual Reality, Markov Chain
\end{IEEEkeywords}

\copyrightnotice

\section{Introduction}
The Metaverse is an advanced Virtual Dynamic Environment (VDE) that merges digital and physical entities through the convergence of extended reality (XR), web technologies, and the Internet.
This has emerged as a key research trend, driven by continuous advances in virtual/augmented reality (VR/AR) and digital twins (DTs). VR/AR plays a fundamental role in enabling long-duration immersive experiences, while DTs serve as a crucial step toward the realization of persistent and self-sustaining alternative realities~\cite{ali2023metaverse}. 


The Metaverse depends on strict quality of service (QoS) requirements. For instance, VR video streaming demands bit rates up to 1 Gbps, while motion-to-photon (MTP) latency must remain below 20 ms to ensure smooth interactions. Furthermore, XR applications require real-time control of both virtual and physical objects, which prompts for haptic feedback with end-to-end latency under 1 ms and 99.999\% reliability~\cite{lincoln2016motion}. Meeting these demands on heterogeneous networks calls for a transformative architecture with significant computational and communication capabilities. From the user’s perspective, low latency is essential not only for immersive and continuous experiences, but also to prevent sensory conflict, a major cause of motion sickness~\cite{kundu2021study}. Predictive algorithms can help reduce perceived latency~\cite{wilson2020task, guo2020exploring, schiavo2025sagesemanticdrivenadaptivegaussian}. For DTs, low-latency communication ensures real-time synchronization between physical and virtual entities, enabling accurate decision making and seamless interaction in the Metaverse~\cite{van2022edge}. 


To measure the freshness of information in a dynamic scenario, we can leverage age of information (AoI)~\cite{9380899}, which quantifies the time elapsed since the generation of the last received data packet. A low AoI value ensures that the information is up-to-date to support precise control and immersive experiences in real time. Optimizing AoI is essential to reduce lags and improve the reliability of short packet communications, enabling efficient and synchronized interactions between the physical and virtual worlds.
Originally introduced to assess network performance, AoI is applied in various domains, including the Internet of things~\cite{8930830}, autonomous vehicles~\cite{5984917}, and remote control systems~\cite{hatami2021aoi}. 

Over time, variations of AoI have been proposed to address specific application needs. 
One notable extension is age of incorrect information (AoII)~\cite{maatouk2020age}, which considers not only the timeliness of updates but also their accuracy. Unlike AoI, which increases regardless of the accuracy of the information, AoII penalizes only incorrect information over time, making it more suitable for remote real-time estimation scenarios \cite{bonagura2025strategic}.

The use of age-related metrics in the Metaverse has not been well investigated, with only a few studies~\cite{cao2023toward, favero2024strategic, xiao2023adaptive}, despite its potential usefulness in ensuring the QoE and QoS required for the full exploitation of VDEs. 
The challenge is that the Metaverse involves a bidirectional exchange of information (such as positions and actions), so AoII alone may not be sufficient. To address this limitation, we propose a novel metric for a pair of agents, which quantifies the time during which either agent does not possess up-to-date information. Our penalty grows larger with the time required to regain complete and accurate knowledge about both agents.

This extension of AoII that includes information timeliness and accuracy in a Metaverse context is called the bidirectional age of incorrect information (BAoII). It not only accounts for an entity's update process, but also quantifies how frequently an entity maintains accurate and up-to-date information about both its own state and that of the other entity. 

This work makes the following key contributions. First, we show how BAoII can be suitable for different VDEs with varying constraints.
Moreover, we characterize BAoII from a mathematical perspective, using a continuous-time Markov chain (CTMC), and by providing examples of its application.  
Finally, with the goal of minimizing BAoII to ensure continuously updated communication, we evaluate how measurement and transmission costs influence the decision-making process of the involved entities. Specifically, we analyze a memoryless scenario in which an entity, upon measuring its state, immediately decides whether to also transmit to the other entity or not. 

The remainder of this paper is organized as follows. Sec.~\ref{sec:applications} explores Metaverse applications in which bidirectional communication is crucial. Sec.~\ref{sec:proposedmetric} introduces BAoII and outlines its relevance in VDEs. Sec.~\ref{sec:systemoverview} presents the mathematical formulation and derivation of this metric. Sec.~\ref{sec:results} discusses the simulation results. Finally,  we conclude in Sec.~\ref{sec:conclusions}.

\section{Bidirectional applications}\label{sec:applications}
The Metaverse encompasses a wide range of immersive applications, demanding specific technical parameters to ensure seamless user experiences. Key factors include frame rate, latency, and data transmission rates, which vary significantly depending on the interactivity and immersion of the application \cite{10007756}. 
We outline the technical requirements for three distinct Metaverse applications, highlighting the central role that bidirectional communications plays in each of them. 

\subsection{High-Immersion Applications}
Medical and surgical applications in the Metaverse demand extremely high performance to ensure precision and safety. These applications include remote robotic-assisted surgeries, real-time medical training, and immersive patient diagnostics. Any perceptible lag in these environments can compromise the effectiveness of training or, in extreme cases, jeopardize patient safety.
A frame rate of $90$--$144$ fps is necessary to provide smooth visual feedback \cite{10049694}. MTP latency must remain below $7$--$15$ ms to avoid disruptive delays during real-time interactions \cite{dohler2025crucial}. 
Tracking data for hand movements and surgical tools typically ranges between $50$--$300$ bytes per update, transmitted at high frequencies ($\geq$$1.000$ Hz) to ensure accuracy \cite{8933555}. 
Haptic feedback, essential for realistic simulations of soft-tissue interactions or force-sensitive procedures, requires larger packets to convey force and texture details ($200$--$1000$ bytes) \cite{9805573}. For high-resolution medical imaging or augmented reality overlays, high-bitrate compressed video can demand $50$--$400$ Mbps to maintain diagnostic precision \cite{queisner2024surgical}.

In collaborative scenarios, bidirectional communication is essential to synchronize actions and maintain shared awareness. Evaluating latency and feedback consistency between users ensures alignment under strict performance constraints.

\subsection{Social VR and Multiplayer Gaming}\label{subsec:VR}
Social interactions in the Metaverse rely on consistent responsiveness and synchronized avatar movements to create a natural and engaging environment. Applications include virtual meetups, collaborative spaces, and multiplayer games. 
Frame rates of $72$–-$120$ fps and latencies below $30$-–$50$ ms support lip-sync and gesture accuracy \cite{van2024impact, cheng2022we}.
The avatar tracking data are relatively lightweight ($100$-–$500$ bytes per update) but are transmitted frequently ($30$-–$90$ Hz) to ensure precise movements.
Spatial audio, a key component of immersion in social VR, consumes $64$--$256$ kbps per user. The bandwidth per participant ranges from $10$ to $100$ Mbps, depending on the complexity of the scene and the number of users \cite{plopski2022eye}. 

Efficient peer-to-peer communication protocols can significantly reduce latency and bandwidth demands, improving scalability for large-scale virtual events, while bidirectional user communication ensures synchronized dialogues, joint actions, and a heightened sense of presence.


\subsection{Smart Cities and Digital Twins}
Smart cities in the Metaverse rely on real-time DTs, i.e., virtual representations of urban systems \cite{allam2022metaverse}. These twins are used to monitor infrastructure, simulate emergency responses, optimize logistics, and manage energy consumption \cite{favero2024logistics}. Their effectiveness depends on continuous bidirectional data exchange between edge devices, centralized systems, and user interfaces, ensuring aligned decision-making and shared real-time context.
Sensor data is typically transmitted at frequencies ranging from $1$ to $100$ Hz, with packet sizes between $200$ bytes (telemetry) and several Mbytes (video surveillance). Most DTs must react in near real time to physical-world changes, requiring latencies below $50$ ms for general updates and $<20$ ms for time-critical decisions (e.g., autonomous vehicle rerouting). Frame rates of $60$–$90$ fps are sufficient for interactive urban visualization in VR/AR environments. Depending on the complexity of the scene, the bandwidth requirements can range from $10$ to $100$ Mbps.



\section{Proposed Metric: BAoII}\label{sec:proposedmetric}
The concept of AoII, introduced in~\cite{maatouk2020age}, quantifies the time elapsed since the knowledge of an entity has become incorrect due to a drift, that is, a state change or a delayed update.  When a new measurement or update restores the correct information, AoII resets to zero; otherwise, it continues to grow.

Inspired by this approach, we observe that in VDEs (e.g.: Metaverse, Social VR) drifts can occur when information about oneself or about another entity becomes outdated or incorrect. Motivated by these scenarios, where bidirectional interaction is required, we introduce BAoII, which captures the mutual freshness of knowledge between entities. 

We consider a system composed of two entities, such as sensors, DTs, or network nodes, that interact bidirectionally. Each entity can perform a self-measurement to monitor its own state and transmit this information to the other.  For BAoII to reset for an entity, it must have two key pieces of information: (i) knowledge of its own condition and (ii) awareness of the condition of the other entity. Information is treated as binary, either correct or incorrect. Any change in the condition of an entity, such as movement, state transitions, or interactions, causes the information to drift, making its state incorrect until observed. To maintain accurate self-awareness, an entity must continuously monitor and update its own state. If no measurement occurs, outdated information may lead to wrong system control.

However, knowledge of the other entity depends on receiving updated information through communication. Consequently, the global information state of the system is influenced by both self-measurements and transmissions between entities. 
The goal is to quantify global knowledge of the system, ensuring synchronization and consistency for all participants.


We focus on a VDE scenario involving two interacting entities. 
Each entity is associated with a process that evolves over time: $X_1(t)$ for entity 1 and $X_2(t)$ for entity 2. These processes are not directly accessible to the entities themselves.
To gain knowledge about their own states, each entity performs a self-observation (measurement) process. We denote these self-sampled estimates as $\widehat{X}_1(t)$ and $\widehat{X}_2(t)$, respectively.

Entities can also share their knowledge with each other. Specifically, entity 1 can acquire knowledge about the other by receiving the information that entity 2 transmits about itself through the process $\widetilde{X}_1(t)$; symmetrically, entity 2 can know entity 1 through $\widetilde{X}_2(t)$.

We consider a discrete time axis, normalized to the time slot duration. 
We define two binary processes $\widehat{Y}_i(t)$ and $\widetilde{Y}_i(t)$ to reflect the correctness of the knowledge held by each entity $i \in \{1, 2\}$ at time $t$:
\begin{IEEEeqnarray}{rCll}
    \widehat{Y}_i(t) &=& X_i(t) \oplus \widehat{X}_i(t) \qquad & \text{(self-knowledge)} \nonumber \\
    \widetilde{Y}_i(t) &=& X_{3{-}i}(t) \oplus \widetilde{X}_i(t) \qquad & \text{(knowledge of the other).} \nonumber 
\end{IEEEeqnarray}

Here, $\oplus$ denotes the bitwise XOR operation, and index $3-i$ denotes the other entity (it is equal to $2$ when $i{=}1$ and vice versa). All these processes take values in $\{\texttt{0}, \texttt{1}\}$, denoting correct knowledge (i.e., no mismatch with the actual state), and error, respectively.

Each entity decides when to inform the other about its sampling process by adopting a transmission policy that aims to minimize the average of a particular penalty function.
For a single entity, the objective is to achieve complete knowledge by accurately knowing its own state, the state of the other entity, and ensuring that the other entity also has correct information about its state.

An entity's knowledge of the other must reflect the latest measurement the other entity has performed and shared. 
In fact, an entity cannot directly measure the condition of another; it relies on the other entity to collect and transmit the necessary information.
An entity does not gain a direct benefit from transmitting its measurement information to the other. From its perspective, system knowledge is achieved when it accurately knows its own state through self-measurement and the state of the other entity through the measurement and transmission of the latter. Since transmission does not add information to the sender, it is not required for its BAoII to reset. However, global system knowledge is only achieved when both entities measure and transmit, ensuring full synchronization.

The system information state is defined as a sequence of these four binary values: $\widehat{Y}_1(t)$, $\widetilde{Y}_{1}(t)$, $\widehat{Y}_2(t)$, $\widetilde{Y}_{2}(t)$ $\forall t$, hereafter represented as a sequence of four bits inside square brackets (e.g.: $[\texttt{1},\texttt{0},\texttt{1},\texttt{1}]$). 
If we consider entity $1$ and one of the first three information values is not correctly updated, a penalty is paid:
\begin{equation}
    \Delta_{err}^{(1)}(t) = \big( \widehat{Y}_1(t) \lor \widehat{Y}_2(t) \lor \widetilde{Y}_1(t) \big) \cdot f^{(1)}(t)
\end{equation}
where $f^{(1)}(t)$ is an increasing time penalty function, paid for not knowing the correct status of the process for a certain period of time. To connect with AoI and AoII \cite{5984917,maatouk2020age}, this penalty increases linearly over time as long as the entity remains in an erroneous state:
\begin{equation}
f^{(1)}(t) = t - \sigma^{(1)}(t)
\end{equation}
\begin{equation*}
\sigma^{(1)}(t) = \max \left\{ t' \leq t \;:\; \widehat{Y}_1(t') = \widehat{Y}_2(t') =  \widetilde{Y}_1(t') = 0 \right\}.
\end{equation*}

Since we want to measure the time between the occurrence of an error and the return of an entity to a correct knowledge condition, what we want to study is the time elapsed until the return to $[\texttt{0},\texttt{0},\texttt{0},\texttt{0}]$ or $[\texttt{0},\texttt{0},\texttt{0},\texttt{1}]$ (if we are considering entity $1$). 
Fig.~\ref{fig:knowledge} graphically represents the different knowledge states of the two entities and the process leading to complete correct bidirectional knowledge, where both entities have accurate information about themselves and each other.
\begin{figure}[t!]
    \centering
    \includegraphics[width=0.8\linewidth]{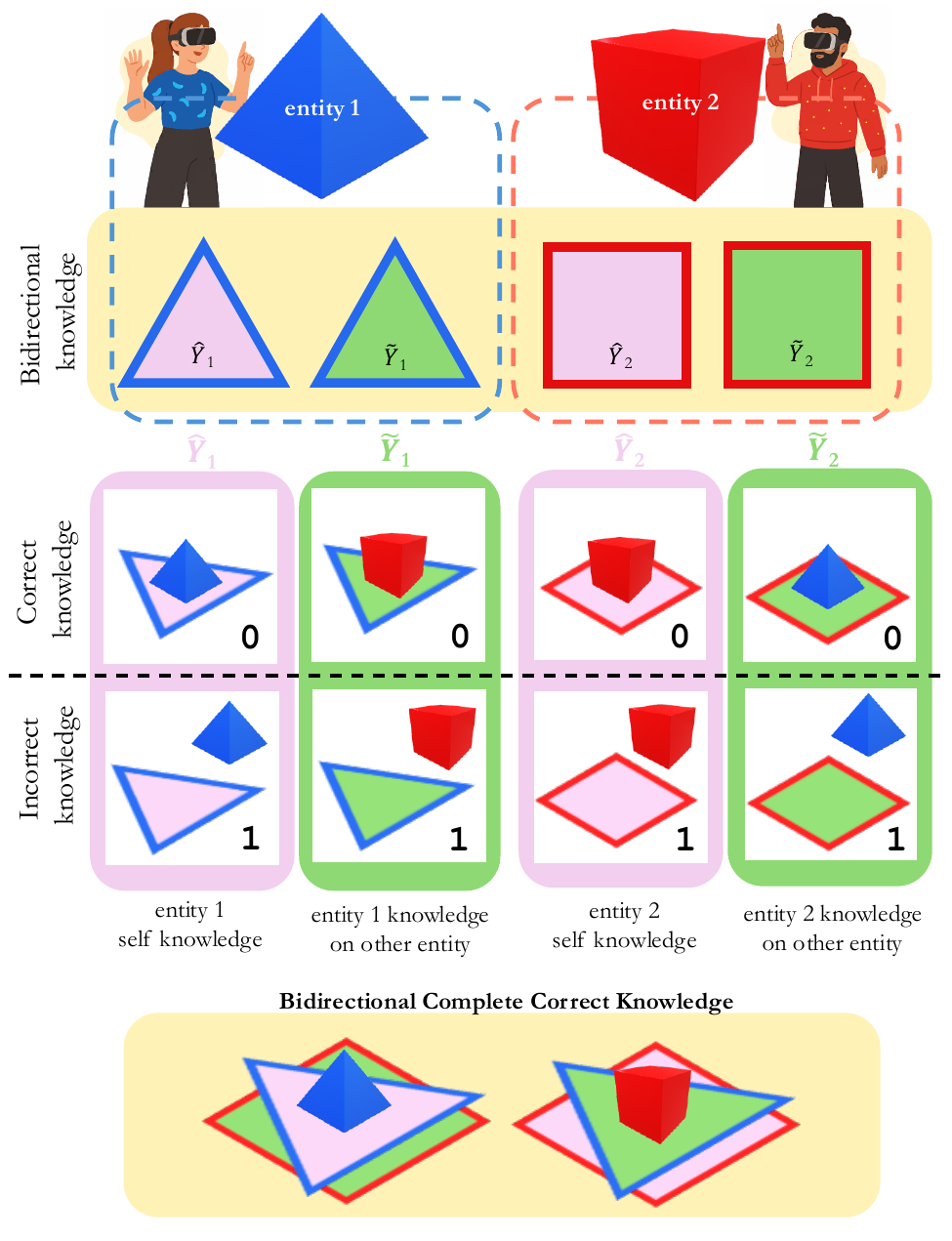}
    \caption{Bidirectional knowledge between two entities.}
    \label{fig:knowledge}
\end{figure}

\begin{table*}[ht]
    \caption{CTMC States Descriptions}
    \centering
    \begin{tabular}{c l l l} 
        \toprule
        \textbf{State name} & \textbf{State coding} & \textbf{Entity 1 State Description} & \textbf{Entity 2 State Description} \\
        \midrule
        O & $[\texttt{0},\texttt{0},\texttt{0},\texttt{0}]$ & Complete correct information state & Complete correct information state\\
        $\Phi$ & $[\texttt{0},\texttt{0},\texttt{0},\texttt{1}]$ & Complete correct information state & Correct on own condition. Error on Entity 1 condition \\
        A & $[\texttt{1},\texttt{0},\texttt{0},\texttt{1}]$  & Error on own condition. Correct on Entity 2 condition & Correct on own condition. Error on Entity 1 condition \\
        B & $[\texttt{0},\texttt{1},\texttt{1},\texttt{0}]$ & Correct on own condition. Error on Entity 2 condition & Correct on own condition. Error on Entity 1 condition \\
        $\Gamma$ & $[\texttt{0},\texttt{1},\texttt{1},\texttt{1}]$ & Correct on own condition. Error on Entity 2 condition & Complete error information state \\
        F & $[\texttt{0},\texttt{1},\texttt{0},\texttt{1}]$ & Correct on own condition. Error on Entity 2 condition. & Correct on own condition. Error on Entity 1 condition. \\
        $\Psi$ & $[\texttt{1},\texttt{1},\texttt{0},\texttt{1}]$ & Error on own condition. Error on Entity 2 condition. & Correct on own condition. Error on Entity 1 condition. \\
        $\Theta$ & $[\texttt{0},\texttt{1},\texttt{0},\texttt{0}]$ & Correct on own condition. Error on Entity 2 condition & Complete correct information state \\
        E & $[\texttt{1},\texttt{1},\texttt{1},\texttt{1}]$ & Complete error information state & Complete error information state \\
        \bottomrule
    \end{tabular}
    \label{tab:statestable}
\end{table*}

\section{System Overview}\label{sec:systemoverview}
We model the system as a CTMC with $9$ states, determined by $4$ information values. The number of states follows from this remark: if $\widehat{Y}_i=\texttt{1}$, i.e. entity $i\in\{1,2\}$ has incorrect information about itself, the other entity $j=3{-}i$ must also have $\widetilde{Y}_j=\texttt{1}$, i.e. incorrect information about $i$. Thus, states $[\texttt{1},\texttt{*},\texttt{*},\texttt{0}]$ and $[\texttt{*},\texttt{0},\texttt{1},\texttt{*}]$ cannot exist. 
Then, drift can corrupt $\widehat{Y}_i$, which in turn affects $\widetilde{Y}_{j}$, but drift cannot occur directly on $\widetilde{Y}_i$ or $\widetilde{Y}_j$.   

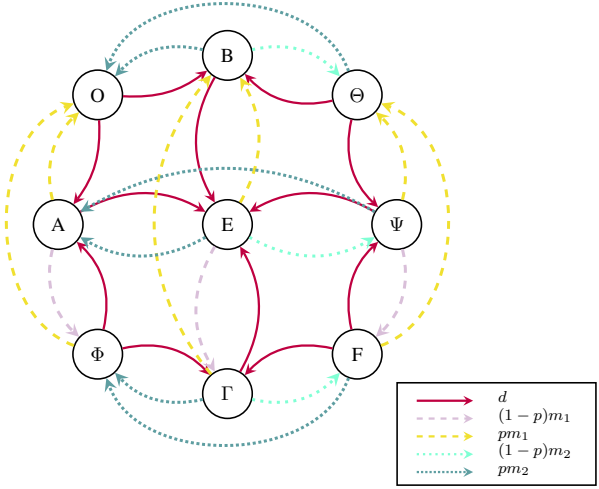
\begin{figure}[t]
    \centering
    \begin{tikzpicture}[->, >=stealth, auto, semithick, scale=0.75, transform shape]
        \node[state] (S1) at (-2.3, 2.3) {O}; 
        \node[state] (S2) at (0, 3) {B};
        \node[state] (S3) at (2.3, 2.3) {$\Theta$};
        \node[state] (S4) at (-3, 0) {A};
        \node[state] (S7) at (-2.3, -2.3) {$\Phi$};
        \node[state] (S8) at (0, -3) {$\Gamma$};
        \node[state] (S9) at  (3, 0) {$\Psi$};
+++        \node[state] (S10) at (2.3, -2.3) {F};
        \node[state] (S5) at (0, 0) {E};
        
        \path[purple, line width=0.3mm] (S1) edge[bend right=20] node {} (S2)
                   edge[bend left=20] node {} (S4);
        \path[purple, line width=0.3mm] (S4) edge[bend left] node {} (S5);
        \path[purple, line width=0.3mm] (S10) edge[bend right] node {} (S8);
        \path[purple, line width=0.3mm] (S10) edge[bend left] node {} (S9);
        \path[purple, line width=0.3mm] (S2) edge[bend right] node {} (S5);
        \path[purple, line width=0.3mm] (S7) edge[bend left] node {} (S8);
        \path[purple, line width=0.3mm] (S7) edge[bend right] node {} (S4);
        \path[purple, line width=0.3mm] (S8) edge[bend right] node {} (S5);
        \path[purple, line width=0.3mm] (S3) edge[bend right] node {} (S9);
        \path[purple, line width=0.3mm] (S3) edge[bend left] node {} (S2);
        \path[purple, line width=0.3mm] (S9) edge[bend right] node {} (S5);
        
        \path[Thistle, dashed, line width=0.4mm] (S4) edge[bend right] node {} (S7);
        \path[Thistle, dashed, line width=0.4mm] (S5) edge[bend right] node {} (S8);
        \path[Thistle, dashed, line width=0.4mm] (S9) edge[bend left] node {} (S10);
        
        \path[Dandelion, dashed, line width=0.4mm] (S7) edge[bend left=70] node {} (S1);
        \path[Dandelion, dashed, line width=0.4mm] (S8) edge[bend left=40] node {} (S2);
        \path[Dandelion, dashed, line width=0.4mm] (S4) edge[bend left] node {} (S1);
        \path[Dandelion, dashed, line width=0.4mm] (S10) edge[bend right=70] node {} (S3);
        \path[Dandelion, dashed, line width=0.4mm] (S9) edge[bend right] node {} (S3);
        \path[Dandelion, dashed, line width=0.4mm] (S5) edge[bend right] node {} (S2);
        
        \path[Aquamarine, dotted, line width=0.4mm] (S2) edge[bend left] node {} (S3);
        \path[Aquamarine, dotted, line width=0.4mm] (S5) edge[bend right] node {} (S9);
        \path[CadetBlue, densely dotted, line width=0.4mm] (S5) edge[bend left] node {} (S4);
        \path[Aquamarine, dotted, line width=0.4mm] (S8) edge[bend right] node {} (S10);
        
        \path[CadetBlue, densely dotted, line width=0.4mm] (S3) edge[bend right=70] node {} (S1);
        \path[CadetBlue, densely dotted, line width=0.4mm] (S2) edge[bend right] node {} (S1);
        \path[CadetBlue, densely dotted, line width=0.4mm] (S9) edge[bend right] node {} (S4);
        \path[CadetBlue, densely dotted, line width=0.4mm] (S8) edge[bend left] node {} (S7);
        \path[CadetBlue, densely dotted, line width=0.4mm] (S10) edge[bend left=70] node {} (S7);

        \node[draw, fill=white, anchor=north west, align=left, font=\footnotesize] at (3,-2.8) {
    \begin{tabular}{l l}
        \begin{tikzpicture} \draw[purple, line width=0.4mm] (0,0) -- (1,0); \end{tikzpicture} & $d$ \\
        \begin{tikzpicture} \draw[Thistle, dashed, line width=0.4mm] (0,0) -- (1,0); \end{tikzpicture} & $(1-p)m_1$ \\
        \begin{tikzpicture} \draw[Dandelion, dashed, line width=0.4mm] (0,0) -- (1,0); \end{tikzpicture} & $p m_1$ \\
        \begin{tikzpicture} \draw[Aquamarine, dotted, line width=0.4mm] (0,0) -- (1,0); \end{tikzpicture} & $(1-p)m_2$ \\
        \begin{tikzpicture} \draw[CadetBlue, densely dotted, line width=0.4mm] (0,0) -- (1,0); \end{tikzpicture} & $p m_2$ \\
    \end{tabular}
};

    \end{tikzpicture}
    \caption{CTMC for BAoII computation. O and $\Phi$ are the two states in which BAoII resets to zero for Entity 1.}
    \label{fig:state_diagram}
\end{figure}

Fig.~\ref{fig:state_diagram} illustrates the full CTMC state space and all possible transitions. The states are defined as described in Tab.~\ref{tab:statestable}.  
We analyze the system from the perspective of entity 1, but, due to its memoryless and symmetric properties, the model equally applies to entity 2.
The system starts in state O, indicating complete knowledge for both entities, or $\Phi$, representing complete knowledge for entity 1 but not for 2. From these states, the system can transition to an error state due to drift, which can corrupt $\widehat{Y}_1$, resulting in A, or $\widehat{Y}_2$, yielding B or $\Gamma$. Subsequently, additional drift may occur, or entities may attempt to improve their knowledge by measuring their own state, transmitting their state to the other entity, or performing both actions simultaneously.  

We define transition matrix $\mathbf{Q}=\{q_{hn}\}_{hn}$, 
where each element $q_{hn}$ represents the transition rate from state $h$ to state $n$ with $h \neq n$. The diagonal elements are defined as $q_{hh} = - \sum_{n \neq h} q_{hn}$.  In a CTMC, the transition rates define the parameters of the exponential distribution that govern the waiting times in each state. 

We introduce some parameters to describe the possible transitions.
Entity $i$ can make a measurement on itself to determine its $\widehat{X}_i$ value, occurring at rate $m$. The transmission of this value to the other entity occurs at a rate $\lambda$, which is proportional to the measurement rate. We set $\lambda = pm$, where $p \in [0,1]$ is a probability that represents the fraction of times the entity transmits its measured value instantaneously.
Consequently, with rate $(1 - p)m$, measurements occur without being transmitted.

As $p$ grows, measurement and transmission occur together more frequently, reducing the rate of non-transmitted measurements.
This ensures that an entity obtains correct information about its own state ($\widehat{Y}_i {=} \texttt{0}$) while updating the other entity’s knowledge about its state, making it correct ($\widetilde{Y}_{j} {=} \texttt{0}$).

In Fig.\ \ref{fig:penaltyfunctionbaoii}, we can observe an example of BAoII behavior and how it differs from AoII as the system transitions through various states. After a drift occurs at time $t_1$, both penalty functions begin to increase. At $t_2$, however, a measurement and transmission by entity $1$ take place simultaneously, bringing the BAoII back to a fully correct state. At the same time, the AoII resets to zero due to the update. Then, a drift occurs at $t_3$ and also at $t_4$. Entity $1$ performs a measurement at $t_5$, where the AoII resets while BAoII continues to rise, as it must wait until $t_6$, when both a measurement and a transmission from entity $2$ occur. Only then does the system return to one of the correct information states, $\Phi$ in this case.

With the transition rates defined in the matrix $\textbf{Q}$, we proceed to compute the cycle time $\mathcal{T}$ of the CTMC. We define $\mathcal{T}$ as the expected reset time for the fully correct knowledge states O and $\Phi$, treated as reset states, starting from the first error states. These first error states -— in our case, A, B, and $\Gamma$ —- are reached after a drift causes the system to deviate from the correct knowledge states.

This formulation leverages renewal theory by interpreting each return to a correct knowledge state as the renewal point of a stochastic cycle, enabling the computation of long-term averages over repeated error-recovery processes.
\begin{theorem} \label{th:baoii}
Given the cyclic return time $\mathcal{T}$, the mean long-term BAoII is the average area of the triangle below the BAoII penalty. Thus, it can be expressed as:
\begin{equation}\label{eq:closedbaoii}
    \Delta_{\text{BAoII}} =\frac{\mathcal{T}}{2} = \frac{2p+1}{4pm}
\end{equation}
\end{theorem}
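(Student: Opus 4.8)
The plan is to separate the claim into two independent parts: the structural identity $\Delta_{\text{BAoII}} = \mathcal{T}/2$, obtained by a geometric renewal-reward argument, and the closed-form evaluation of the cycle time $\mathcal{T}$, obtained by solving a first-passage problem on the CTMC of Fig.~\ref{fig:state_diagram}.

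For the first part, I would follow the renewal interpretation already set up in the text: consecutive returns to the reset set $\{\text{O},\Phi\}$ are renewal epochs, and on each error-recovery episode the penalty starts at zero and grows with unit slope, $f^{(1)}(t)=t-\sigma^{(1)}(t)$, until the reset. Hence, on each episode the penalty traces a triangle whose base is the episode length and whose height equals that same length, because the slope is one. The time-average ordinate of a ramp that rises linearly from zero to its peak is exactly half the peak, so averaging over an episode of expected length $\mathcal{T}$ yields the stated identity $\Delta_{\text{BAoII}}=\mathcal{T}/2$; equivalently, the renewal-reward theorem gives an area-per-cycle $\tfrac12\mathcal{T}^2$ over a cycle of length $\mathcal{T}$. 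I would make explicit that $\mathcal{T}$ is precisely the expected error-recovery time defined just above the theorem, so that this step reduces to the ``area of the triangle'' computation named in the statement.

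For the second part, I would compute $\mathcal{T}$ as a mean first-passage time to the absorbing set $R=\{\text{O},\Phi\}$. Writing $h_s$ for the expected time to hit $R$ from a transient state $s\in\{\text{A},\text{B},\Gamma,\text{F},\Psi,\Theta,\text{E}\}$ and $q_s:=\sum_{n\neq s} q_{sn}$ for the total exit rate, the generator $\mathbf{Q}$ gives the linear system $q_s\,h_s = 1 + \sum_{n\neq s} q_{sn}\,h_n$, with $h_{\text{O}}=h_{\Phi}=0$. Reading the transitions off Fig.~\ref{fig:state_diagram} (drift at rate $d$; self-measurement without transmission at rate $(1{-}p)m$; measurement-plus-transmission at rate $pm$, separately for each entity), I would exploit the entity symmetry $m_1=m_2=m$ together with the observation that only entity~1's own re-measurement and entity~2's measurement-and-transmission can clear the three bits $\widehat Y_1,\widehat Y_2,\widetilde Y_1$ that matter for entity~1. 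This collapses the system to a few independent unknowns. Finally, $\mathcal{T}$ is the average of $h_{\text{A}},h_{\text{B}},h_{\Gamma}$ weighted by the probabilities with which a drift out of a reset state enters each first-error state; simplifying should give $\mathcal{T}=\tfrac{1}{m}+\tfrac{1}{2pm}=\tfrac{2p+1}{2pm}$, and hence Eq.~\eqref{eq:closedbaoii}.

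I expect the main obstacle to be the bookkeeping of the first-passage system rather than any single deep step: one must transcribe the full transition structure correctly, verify that the contributions of the drift rate $d$ cancel so that $\mathcal{T}$ depends only on $m$ and $p$ as the stated formula requires, and justify the weighting of the initial error states. A secondary subtlety is confirming that recovery of entity~1 genuinely requires a measurement-and-transmission event by entity~2 — since $\widetilde Y_1$ can be reset only when entity~2 both re-measures and transmits — which is what injects the factor $1/(2pm)$ and separates BAoII from ordinary AoII.
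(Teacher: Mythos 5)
Your proposal matches the paper's own proof in both structure and substance: the Appendix likewise obtains $\Delta_{\text{BAoII}}=\bigl(\mathcal{T}^2/2\bigr)/\mathcal{T}=\mathcal{T}/2$ from the unit-slope triangle over each error-recovery cycle, and computes $\mathcal{T}$ by solving exactly your hitting-time linear system for the reset times $\tau_{\text{A}},\dots,\tau_{\text{E}}$ with reset set $\{\text{O},\Phi\}$, then averages $\tau_{\text{A}},\tau_{\text{B}},\tau_{\Gamma}$ with weights built from the stationary split $P_{\text{O}},P_{\Phi}$ and the conditional drift-exit probabilities, arriving at $\mathcal{T}=\tfrac{1}{m}+\tfrac{1}{2pm}$ with the drift rate $d$ cancelling as you anticipated. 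The only cosmetic difference is that the paper makes the weighting explicit by first solving $\boldsymbol{\pi}\mathbf{Q}=\mathbf{0}$ for $\pi_{\text{O}}$ and $\pi_{\Phi}$, which is precisely the justification you flagged as needed.
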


\begin{proof}
    See the Appendix. 
\end{proof}

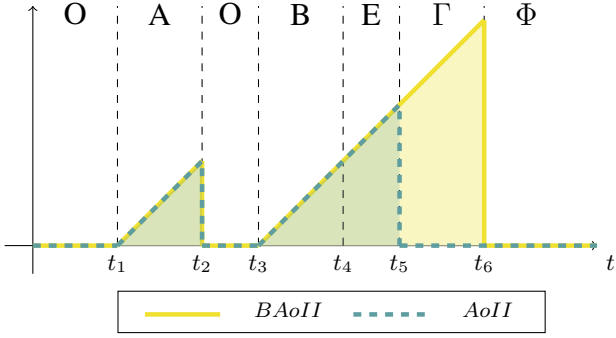
\begin{figure}[t!]
    \centering

\begin{tikzpicture}[scale=0.75]
\draw[->] (-0.5,0) -- (10,0); 
\draw[->] (0,-0.5) -- (0, 4.25);  

\coordinate (d1)    at (1.5,0);
\coordinate (m1t1)  at (3,1.5); 
\coordinate (d2)    at (4,0);
\coordinate (m1)    at (8,4);     
\coordinate (m2t2)  at (8,0);

\fill[Dandelion!30] (d1) -- (m1t1) -- (3,0) -- cycle;
\fill[Dandelion!30] (d2) -- (m1) -- (m2t2) -- cycle;

\foreach \x in {1.5,3,4,5.5,6.5,8}
    \draw[dashed] (\x,0) -- (\x,4.25);

\draw[Dandelion, thick, line width=0.6mm] (0,0) -- (1.5,0);
\draw[Dandelion, thick, line width=0.6mm] (3,0) -- (4,0);
\draw[Dandelion, thick, line width=0.6mm] (8,0) -- (10,0);

\draw[Dandelion, thick, line width=0.6mm] (d1) -- (m1t1);
\draw[Dandelion, thick, line width=0.6mm] (m1t1) -- (3,0);

\draw[Dandelion, thick, line width=0.6mm] (d2) -- (m1);
\draw[Dandelion, thick, line width=0.6mm] (m1) -- (m2t2);

\draw[CadetBlue, dashed, line width=0.6mm] (0,0) -- (1.5,0);
\draw[CadetBlue,dashed, line width=0.6mm] (3,0) -- (4,0);
\draw[CadetBlue,dashed, line width=0.6mm] (6.5,0) -- (10,0);

\draw[CadetBlue, dashed, line width=0.6mm] (1.5,0) -- (3,1.5) -- (3,0);
\draw[CadetBlue, dashed, line width=0.6mm] (4,0) -- (6.5,2.5) -- (6.5,0);
\fill[CadetBlue, fill opacity=0.2] (1.5,0) -- (3,1.5) -- (3,0) -- cycle;
\fill[CadetBlue, fill opacity=0.2] (4,0) -- (6.5,2.5) -- (6.5,0) -- cycle;

\node[below] at (1.5,0)   {\small $t_1$};
\node[below] at (3,0) {\small $t_2$};
\node[below] at (4,0)   {\small $t_3$};
\node[below] at (5.5,0) {\small $t_4$};
\node[below] at (6.5,0) {\small $t_5$};
\node[below] at (8,0) {\small $t_6$};

\foreach \x/\l in {0.75/O, 2.25/A, 3.5/O, 4.75/B, 6/E, 7.25/$\Gamma$, 8.75/$\Phi$}
    \node[above] at (\x,3.75) {\large \l};

\node[below right] at (10,0) {\small $t$};

\node[draw, fill=white, anchor=north west, align=left, font=\footnotesize] at (1.5,-0.8) {
\begin{tabular}{l l l l}

    \begin{tikzpicture} \draw[Dandelion, thick, line width=0.6mm] (0,0) -- (1,0); \end{tikzpicture} & $BAoII$ \quad
    \begin{tikzpicture} \draw[CadetBlue, dashed, line width=0.6mm] (0,0) -- (1,0); \end{tikzpicture} & $AoII$ \\
\end{tabular}

};

\end{tikzpicture}
\caption{Comparison of BAoII and AoII in the bi-entity system following a sequence of state transitions in the CTMC. Markers $t_1 - t_6$ indicate key events such as drifts, measurements, and transmissions.}
    \label{fig:penaltyfunctionbaoii}
\end{figure}

Each transition has an associated cost: $k_m$ for a measurement and $k_{\lambda}$ for a transmission. If both actions occur within the same transition, the total cost is $k_m + k_{\lambda}$. However, a holding cost is associated with each state, corresponding to measurement and transmission actions performed without initiating a state transition.
Thus, we define the average long-term cost for an entity as
\begin{equation}
    K = k_m\cdot m + k_{\lambda} \cdot pm
\end{equation}
which accounts for the cost of individual measurements and measurements immediately followed by a transmission.
Although BAoII should ideally remain close to zero, this would require frequent measurements and high transmission rates, which would lead to significant costs. Thus, a trade-off between cost and information accuracy must be established.
The total cost associated with BAoII, including both the penalty for incorrect information and the total cost of actions, is
\begin{equation}\label{eq:kbaoii}
   \mathcal{C} = \Delta_{\text{BAoII}} + K
\end{equation}
where the first term represents the BAoII penalty, which increases with the time needed to restore full system knowledge. Minimizing BAoII requires frequent measurements and transmissions, leading to higher costs. Thus, the problem reduces to optimizing $\mathcal{C}$, that is,
\begin{equation}\label{eq:totalcost}
    \mathcal{C} = \frac{2p+1}{4pm} + (k_m + pk_{\lambda}) m
\end{equation}

Given the trade-off between transmission costs and the BAoII penalty, it is essential to investigate optimal transmission policies under different cost constraints. Our goal is to identify a cost threshold below which the optimal strategy is to always transmit immediately after each measurement. 
From this observation, we get this theorem.

\begin{theorem}
There exists a threshold value
$k^\circ_{\lambda}$ such that, for any transmission cost $k_{\lambda} < k^\circ_{\lambda} $, the optimal transmission policy minimizing the total cost $\mathcal{C}$ is to always transmit after every measurement, i.e., $p^* = 1$. This threshold is found as
\begin{equation}\label{eq:kt}
    k^\circ_{\lambda} = \frac{1}{4m^2}
\end{equation}
\end{theorem}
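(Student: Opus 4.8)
The plan is to treat the total cost $\mathcal{C}$ of \eqref{eq:totalcost} as a function of the single decision variable $p \in (0,1]$, with the measurement rate $m$ and the per-action costs $k_m, k_\lambda$ held fixed, and then carry out a one-dimensional constrained minimization. First I would isolate the $p$-dependence of the BAoII penalty by writing $\tfrac{2p+1}{4pm} = \tfrac{1}{2m} + \tfrac{1}{4pm}$, so that
\[
\mathcal{C}(p) = \frac{1}{2m} + \frac{1}{4pm} + k_m m + p\, k_\lambda m .
\]
The constant terms $\tfrac{1}{2m}$ and $k_m m$ play no role in the optimization; the genuine trade-off is between the strictly decreasing penalty term $\tfrac{1}{4pm}$ and the strictly increasing transmission term $p\, k_\lambda m$.

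Next I would compute derivatives and establish convexity:
\[
\mathcal{C}'(p) = -\frac{1}{4p^2 m} + k_\lambda m, \qquad \mathcal{C}''(p) = \frac{1}{2p^3 m} > 0 \quad (p>0),
\]
so $\mathcal{C}$ is strictly convex on $(0,1]$. Solving $\mathcal{C}'(p)=0$ yields the unique unconstrained stationary point $p_0 = \tfrac{1}{2m\sqrt{k_\lambda}}$, which is the global minimizer of the convex objective over $(0,\infty)$.

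Finally I would turn the optimality of the corner $p=1$ into a bound on $k_\lambda$. Because $\mathcal{C}$ is convex and decreasing on $(0,p_0)$, the minimizer restricted to $(0,1]$ coincides with the endpoint $p^*=1$ exactly when $p_0 \ge 1$. The inequality $\tfrac{1}{2m\sqrt{k_\lambda}} \ge 1$ rearranges to $k_\lambda \le \tfrac{1}{4m^2}$, which pins down the threshold $k^\circ_\lambda = \tfrac{1}{4m^2}$ announced in \eqref{eq:kt}; for any $k_\lambda < k^\circ_\lambda$ the stationary point satisfies $p_0 > 1$, so $\mathcal{C}$ is strictly decreasing across the whole admissible interval and $p^*=1$ is the unique optimum.

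Since the algebra is elementary, the step deserving the most care is justifying that the solution is a boundary (corner) solution rather than an interior one: this is precisely where strict convexity is essential, guaranteeing that once the stationary point is pushed beyond the feasible range the minimum migrates to the endpoint $p=1$. A secondary check is the limiting behavior as $p \to 0^+$, where $\mathcal{C} \to \infty$ --- with no transmissions the other entity's knowledge is never refreshed --- which confirms that the excluded endpoint can be ignored and that the comparison is genuinely between the interior stationary point and $p=1$.
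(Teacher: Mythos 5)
Your proposal is correct and takes essentially the same route as the paper: differentiate $\mathcal{C}$ with respect to $p$, exploit $\mathcal{C}''(p)=\tfrac{1}{2p^3 m}>0$, and obtain $k^\circ_\lambda=\tfrac{1}{4m^2}$ from the condition that the stationary point $p_0=\tfrac{1}{2m\sqrt{k_\lambda}}$ lies at or beyond the boundary $p=1$. If anything, your write-up is slightly more careful than the paper's one-line proof, which calls $p=1$ a ``local minimum'' via the second derivative even though, for $k_\lambda<k^\circ_\lambda$, $p=1$ is not a stationary point but a corner solution with $\mathcal{C}'(1)<0$ --- exactly the point your convexity-plus-monotonicity argument handles properly.
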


\begin{proof}
To determine whether \( p = 1 \) minimizes the total cost \( \mathcal{C} \), we take the partial derivative of \( \mathcal{C} \) with respect to \( p \), set it equal to zero and evaluate the condition under which \( p = 1 \) satisfies the optimality criterion. Solving the resulting expression for \( k_{\lambda} \) yields the threshold value \( k^\circ_{\lambda} \). To confirm that this corresponds to a minimum, we evaluate the second derivative of \( \mathcal{C} \) with respect to \( p \) and observe that it is strictly positive for all \( p > 0 \) and \( m > 0 \), ensuring that \( p = 1 \) is a local minimum.
\end{proof}

\section{Results}\label{sec:results}
We present numerical results based on the mathematical analysis of the CTMC model. Furthermore, we demonstrate the applicability of BAoII for evaluating the performance of real-world VDE scenarios.

From (\ref{eq:closedbaoii}), we observe that, after algebraic simplification, the value of $\Delta_{\text{BAoII}}$ is independent of the drift $d$ and depends only on the measurement rate $m$ and the probability $p$.
Consequently, in Fig.~\ref{fig:cycle duration}  we show how $\Delta_{\text{BAoII}}$ varies with the rate $m$ for different values of $p$. 
To minimize $\Delta_{\text{BAoII}}$, the optimal strategy is to increase $m$ and set $p=1$. This guarantees that all measurements are promptly transmitted to the other entity, enabling the system to quickly attain a fully correct state.


However, as shown in (\ref{eq:kbaoii}), minimizing $\Delta_{\text{BAoII}}$ must be balanced against the associated measurement and transmission costs.
Thus, we consider a high-immersion VDE in which the costs are modeled as proportional to the packet size (in Mbytes), with time measured in seconds and the measurement rate in Hz.
We assume that the packets required to update the user state originate from controller inputs, which facilitate movement in the virtual environment, or from haptic feedback. These packets are assumed to have a maximum size of $500$ bytes. For transmission, we account for packets that also carry user position tracking data, which are necessary for visual updates and typically have an average size of $1$ kbytes. 

Fig. \ref{fig:costapp1} shows the total cost $\mathcal{C}$ of this scenario. We can see that $p=1$ minimizes $\mathcal{C}$ for rates up to approximately $20$ Hz, corresponding to an average latency of $50$ ms. However, as the rate increases, $p=1$ becomes less efficient due to the larger transmission cost. In such cases, tolerating a higher BAoII becomes preferable, as it allows updates to be transmitted less frequently, reducing transmission costs.

\begin{figure}[t]
    \centering
    \includegraphics[width=\linewidth]{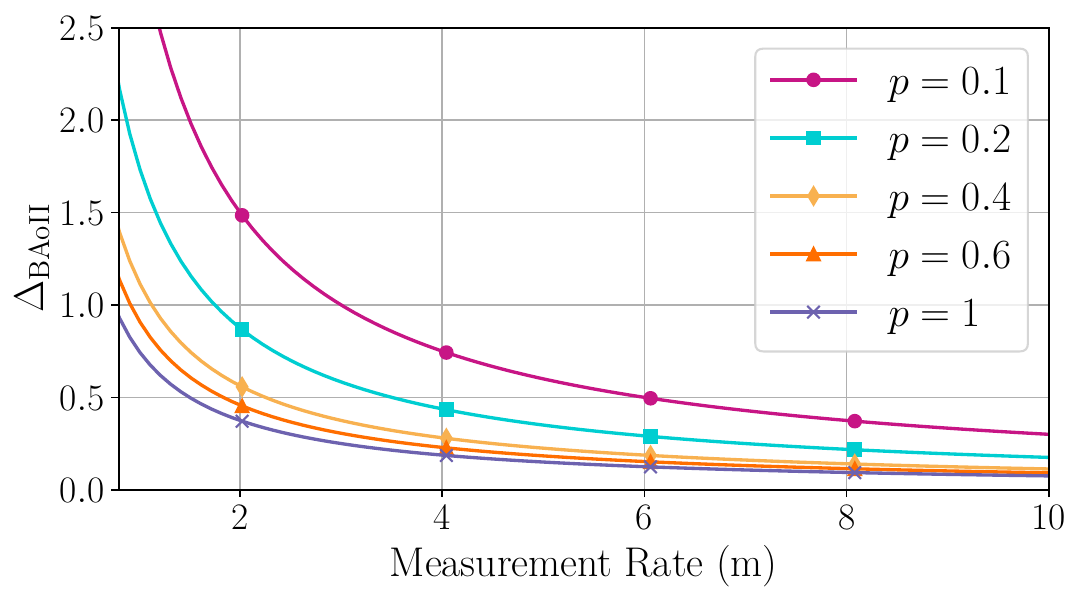}
    \caption{Expected $\Delta_{\text{BAoII}}$ as a function of the measurement rate $m$, for different values of $p$.}
    \label{fig:cycle duration}
\end{figure}
\begin{figure}[tbp]
    \centering
    \includegraphics[width=\linewidth]{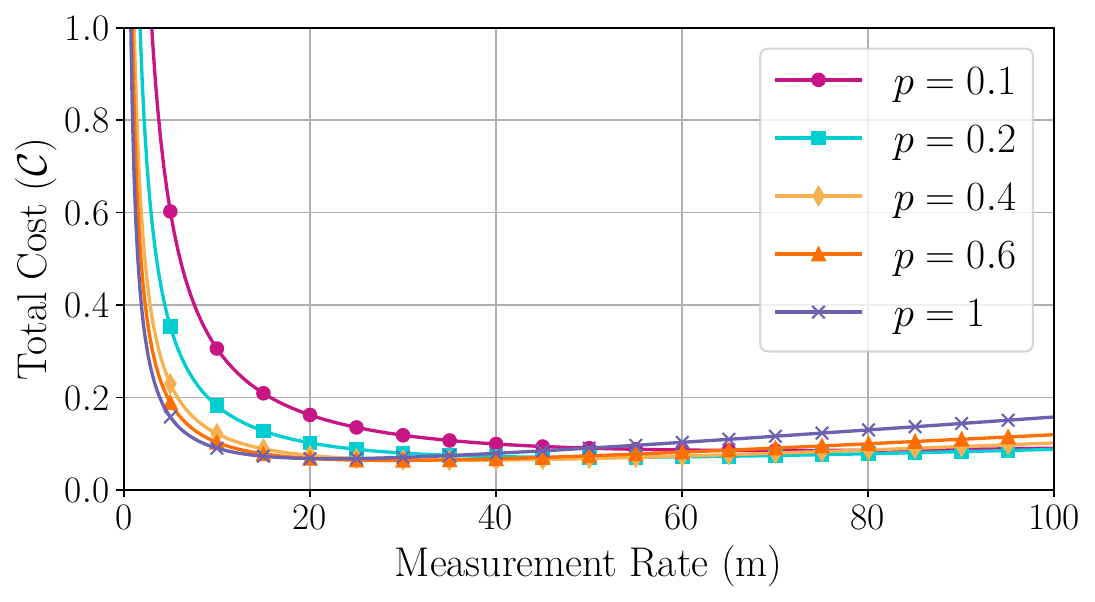}
    \caption{Total cost $\mathcal{C}$ vs $m$ with various values of $p$ and with $k_m =500$B and $k_{\lambda}=1$kbytes }
    \label{fig:costapp1}
\end{figure}
\begin{figure}[t]
    \centering
    \includegraphics[width=\linewidth]{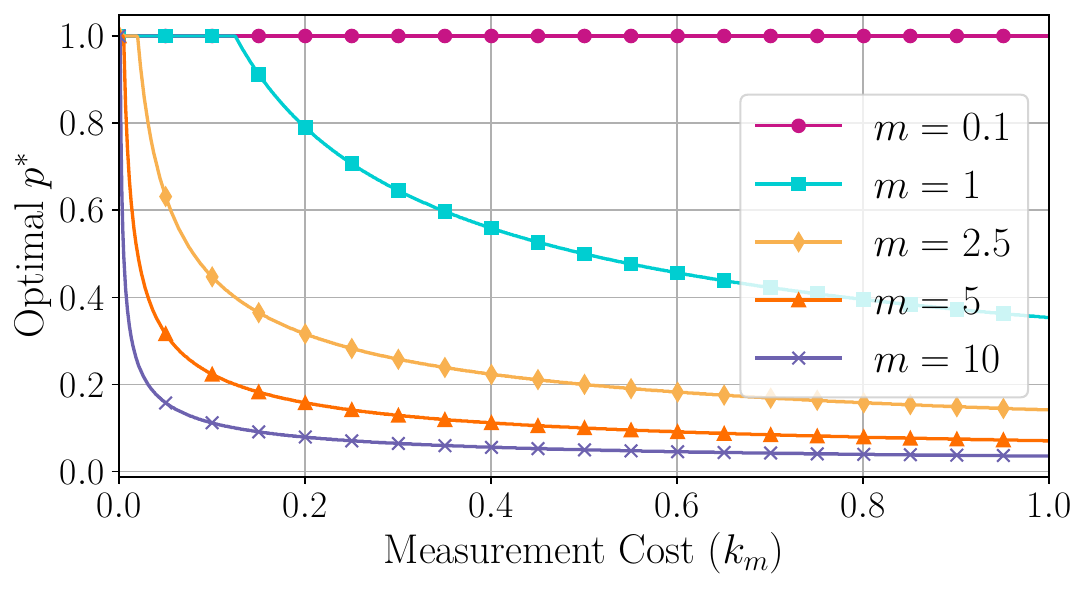}
    \caption{Optimal value of $p$ versus imposed measurement rate $0\leq k_m \leq 1$ Mbytes and $k_{\lambda} = 2k_m$}
    \label{fig:optimalp}
\end{figure}

Next, we assume that the transmission cost and the measurement cost are linearly related \cite{van2024impact}, $k_{\lambda} = \eta \cdot k_m$, where $\eta$ represents the cost ratio, we can rewrite (\ref{eq:kt}) as follows:
\begin{equation}
    k^\circ_m = \frac{1}{4\eta m^2}.
\end{equation}
Thus, we analyze how transmission costs and measurement rates affect the optimal value of $p$. We take $k_{\lambda} = 2k_m$ \cite{plopski2022eye} and compute the partial derivative of $\Delta_{\text{BAoII}}$ with respect to $p$. We then analyze the resulting optimal value $p^*$ in different configurations of cost and transmission rate, as shown in Fig.~\ref{fig:optimalp}. As transmission costs $k_{\lambda}$ and measurement rate $m$ increase, the cost interval over which $p = 1$ remains optimal becomes progressively narrower. In particular, for $m = 0.1$, the policy with $p = 1$ remains optimal even with high transmission costs. This is due to the relatively low number of transmissions and measurements that occur in a unit time. However, this behavior holds only within a certain cost range; for significantly higher costs, even a low $m$ would no longer sustain $p = 1$ as the optimal policy. These results highlight the importance of evaluating both the cost constraints and the desired transmission rate when designing systems for different application scenarios.
\begin{figure}[t!]
    \centering
    \includegraphics[width=\linewidth]{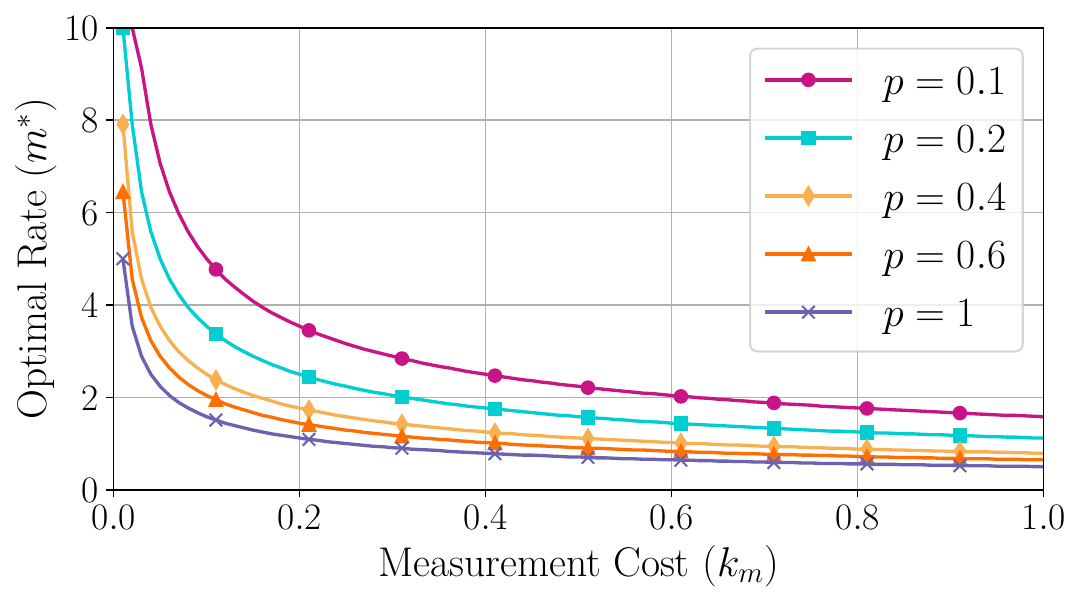}
    \caption{Optimal value of measurement rate versus imposed measurement rate $0\leq k_m \leq 1$ Mbytes and $k_{\lambda} = 2k_m$} 
    \label{fig:optimalm}
\end{figure}

Conversely, Fig.~\ref{fig:optimalm} illustrates how the optimal measurement rate $m$ varies with cost for different values of $p$. For $p = 1$, the optimal values of $m$ are the lowest among all configurations, while for $p = 0.1$, significantly higher values of $m$ become feasible. However, frequent measurements coupled with infrequent transmissions do not effectively minimize $\Delta_{\text{BAoII}}$, see also Fig.~\ref{fig:cycle duration}. Nonetheless, this strategy allows systems to tolerate higher costs, albeit with an increased $\Delta_{\text{BAoII}}$.

Now, consider a generic $\eta$ to study the maximum $k_m$ that can be imposed in a given scenario to achieve a desired measurement rate. In contrast, we can determine the maximum achievable measurement rate given a specific cost model.

In Fig.~\ref{fig:measurement_cost}, 
we illustrate the maximum measurement cost that can be tolerated to support a given measurement rate $m$ under various values of $\eta$.
To contextualize the impact of these constraints, we analyze two distinct types of VDEs with markedly different requirements, as described in Sec.\ \ref{sec:applications}.
    
The first scenario involves highly immersive collaboration or real-time online interaction \cite{favero2024strategic, dohler2025crucial}, where updates are required within the range of 10 to 100 milliseconds \cite{8933555}, corresponding to measurement rates $ 10 \text{Hz}\leq m\leq100 \text{Hz}$. In this case, the maximum allowable costs are considerably lower as frequent transmissions are essential. These systems typically rely on less reliable but faster communication protocols. The acceptable cost range in such contexts varies from a maximum of $10^1$ to a minimum of $10^{-7}$.

In the second scenario, we consider communicating DTs for logistics\cite{favero2024logistics}, where updates are typically required every 10 to 100 seconds. These applications generally operate over more reliable communication protocols with higher transmission costs \cite{allam2022metaverse}. Depending on the application-specific value of $\eta$, the maximum allowable measurement cost ranges from $10^5$ to $10^{-1}$, reflecting scenarios where transmission is significantly more expensive than measurement.

\begin{figure}
    \centering
    \includegraphics[width=\linewidth]{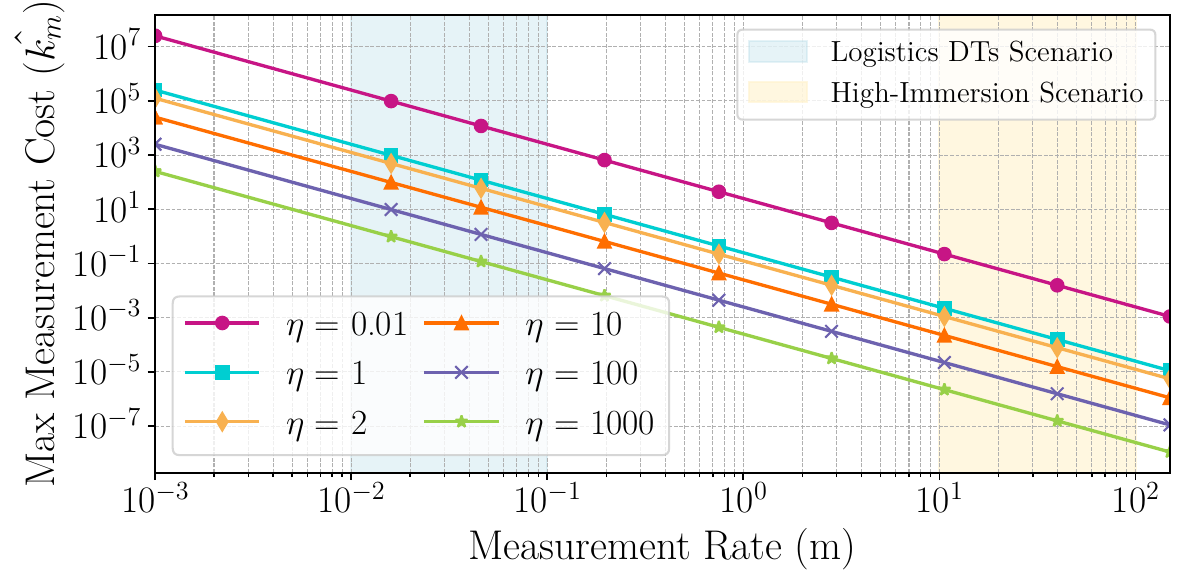}
    \caption{Maximum allowable measurement cost for a given measurement rate $m$, under different values of $\eta$.}
    \label{fig:measurement_cost}
\end{figure}

\section{Conclusions}\label{sec:conclusions}
We introduced BAoII, a metric related to age of incorrect information that extends status freshness and accuracy to bidirectional exchanges, to assess mutual awareness of agents in dynamic systems like the Metaverse and VR. 
Through a CTMC analysis, we demonstrated that minimizing BAoII requires a careful balance between the measurement frequency $m$ and the probability $p$ of transmitting the measurement. We proved that a persistent transmission policy ($p=1$) minimizes BAoII and we found the cost threshold under which this strategy minimizes the total cost $\mathcal{C}$. This threshold can be used in various VDEs to bridge system cost models with performance requirements. Specifically, it enables either the estimation of achievable measurement rates given cost constraints or, conversely, the derivation of cost settings required to sustain a desired measurement rate. This makes our analysis valuable for adaptive system design under resource limitations.

BAoII can be a valuable tool for real-time collaboration, with future applications including energy harvesting sensor systems \cite{hatami2021aoi} and semantic communications \cite{8930830}. 
Future works may also involve extending the analysis to more complex multi-entity systems for large-scale VDE, addressing cybersecurity challenges in scenarios with untrusted entities\cite{bonagura2025strategic}, and employing game-theoretic approaches to evaluate the impact of entities selfish behavior on the system.

\bibliographystyle{IEEEtran}
\bibliography{IEEEabrv, biblio}

\appendix \label{sec:appendix}
This is a detailed derivation of the calculations that led to the formulation of $\Delta_{\text{BAoII}}$ in Theorem \ref{th:baoii}.

We start by assuming that the CTMC is ergodic, which allows us to compute the probability that entity 1 is in one of the two fully accurate information states, namely O and $\Phi$.
Let $\mathcal{S} = \{\text{O}, \Phi, \text{A}, \text{B}, \Gamma, \text{E}, \text{F}, \Psi, \Theta \}$ be the set of all states, and let $\textbf{Q} \in \mathbb{R}^{9\times9}$ be the transition matrix, where each entry represents a transition rate of the CTMC depicted in Fig.~\ref{fig:state_diagram}. 
The steady-state condition for all states is given by $\boldsymbol{\pi} \mathbf{Q} = \mathbf{0}$, where $\boldsymbol{\pi} = (\pi_i)$ with $i \in \mathcal{S}$ is the stationary distribution vector and $\mathbf{0}$ denotes a zero vector of size 9. The normalization condition is $\sum_{s \in \mathcal{S}} \pi_s = 1$.

For each state $n$, the sum of the transition rates weighted by their respective probabilities must be zero:
\begin{equation}\label{eq:pi2}
\sum_{h \in \mathcal{S} } \pi_n q_{nh} = 0 \quad \forall n \in \mathcal{S}.
\end{equation}
Solving this system yields the steady-state probabilities $\boldsymbol{\pi}.$ 
Among them, in particular, we need the following steady-state probabilities. 
\begin{IEEEeqnarray}{rCl}
\small
\pi_{\Phi} &=& \frac{d m_1 m_2 (1 {-} p)^2 p}{(d {+} m_1 (1 {-} p) p)(d {+} m_2 (1 {-} p) p)(d {+} m_1 {-} m_1 p^2)} \quad \\
\pi_{\text{O}} &=& \frac{m_1 m_2 (1 {-} p)^2 p^2}{(d {+} m_1 (1 {-} p) p)(d {+} m_2 (1 {-} p) p)} \, .
\end{IEEEeqnarray}
We define the conditioned probabilities of starting in states O and $\Phi$ as $P_{\text{O}}$ and $P_{\Phi}$, respectively:
\begin{equation}
    P_{\text{O}} = \frac{\pi_{\text{O}}}{\pi_{\text{O}} + \pi_\Phi} \qquad P_\Phi = \frac{\pi_\Phi}{\pi_{\text{O}} + \pi_\Phi} \; .
\end{equation}
To compute the time spent in erroneous states, we treat O and $\Phi$ as reset states. 
Then, we determine the expected time required to enter the reset state.

Transitions away from O and $\Phi$ into incomplete correctness states occur only through drift. Specifically:
\begin{itemize}
    \item From O, drift can lead to states A or B.
    \item From $\Phi$, drift can result in transitions to A or $\Gamma$.
\end{itemize}
Thus, A, B, and $\Gamma$ are the first erroneous states, representing the first states of incorrect information that can be reached from O and $\Phi$. 
We aim to compute the transition probabilities to these states and, subsequently, determine the system's expected reset time from each of them.

We define conditional transition probabilities $s_{h,n}$, representing the probability of transitioning to state $h$ given that the process starts in state $n$, as:
\begin{equation}\label{eq:transizionicondizionate}
   s_{h,n}= \frac{q_{hn}}{\sum_{l \neq 0}q_{hl}}
\end{equation}  
where $l$ represents the other transition rates present in the corresponding row of the transition rate matrix $\textbf{Q}$ for the initial state considered.
Next, we define the system's average reset time, representing the average duration of a CTMC cycle, as the expected time to return to either O or $\Phi$ after reaching one of the first three erroneous states A, B, or $\Gamma$:

\begin{equation}\label{eq:tass}
    \mathcal{T} = P_{\text{O}}(s_{\text{O},\text{A}}\cdot\tau_{\text{A}} + s_{\text{O},\text{B}}\cdot\tau_{\text{B}}) + P_{\Phi}(s_{\Phi,\text{A}}\cdot\tau_{\text{A}} + s_{\Phi,\Gamma}\cdot\tau_{\Gamma})
\end{equation}

where $\tau$ represents the average reset time starting from a given state.
 The reset times $\tau_{\text{A}}$, $\tau_{\text{B}}$, and $\tau_{\Gamma}$ are calculated using a system of equations derived from the conditional transition probabilities described in (\ref{eq:transizionicondizionate}). The system of equations for the reset times is as follows:
\begin{IEEEeqnarray}{c}
\small
\left\{ \,
\begin{IEEEeqnarraybox}[][c]{l?s}
  \IEEEstrut
  \tau_{\text{A}} = \frac{2}{d+m_1} + \frac{d}{d+m_1} \tau_{\text{E}} & \\
    \tau_{\Gamma} = \frac{1}{m_2+d+p m_1} 
  + \frac{(1-p) m_2}{m_2+d+p m_1} \tau_{\text{F}} & \\
  \quad + \frac{d}{m_2+d+p m_1} \tau_{\text{E}} 
  + \frac{p m_1}{m_2+d+p m_1} \tau_{\text{B}} & \\
  \tau_{\text{E}} = \frac{p m_1}{m_1+m_2} \tau_{\text{B}} 
  + \frac{(1-p) m_1}{m_1+m_2} \tau_{\Gamma} & \\
   \quad + \frac{p m_2}{m_1+m_2} \tau_{\text{A}} 
  + \frac{(1-p) m_2}{m_1+m_2} \tau_{\Psi} & \\
   \tau_{\text{F}} = \frac{p m_1}{p m_1+p m_2+2d} \tau_{\Theta} 
  + \frac{1}{p m_1+p m_2+2d} & \\
   \quad + \frac{d}{p m_1+p m_2+2d} \tau_{\Gamma} 
  + \frac{d}{p m_1+p m_2+2d} \tau_{\Psi} & \\
   \tau_{\Psi} = \frac{p m_1}{m_1+d+p m_2} \tau_{\Theta} 
  + \frac{(1-p) m_1}{p m_1+p m_2+2d} \tau_{\text{F}} & \\
   \quad + \frac{d}{p m_1+p m_2+2d} \tau_{\text{E}} 
  + \frac{p m_2}{p m_1+p m_2+2d} \tau_{\text{A}} & \\
   \tau_{\text{B}} = \frac{1}{m_2+d} 
  + \frac{(1-p) m_2}{m_2+d} \tau_{\Theta} 
  + \frac{d}{m_2+d} \tau_{\text{E}} & \\
  \tau_{\Theta} = \frac{1}{p m_2+2d} 
  + \frac{d}{p m_2+2d} \tau_{\text{B}} 
  + \frac{d}{p m_2+2d} \tau_{\Psi} &
  \IEEEstrut
\end{IEEEeqnarraybox}
\right.
\label{eq:system_tau}
\end{IEEEeqnarray}

This system of equations allows us to compute the individual reset times for each state. By solving this system, we obtain the following expressions for the reset times:
\begin{IEEEeqnarray}{rCl}
\small
    \tau_{\Theta} &= \frac{2p d m_2 + p m_1 m_2 + d m_1 + m_1^2}
    {p m_1 m_2 (p m_2 + 2d + m_1)} \\
    \tau_{\text{F}} &= \frac{2p d m_2 + p m_1 m_2 + d m_1 + m_1^2}
    {p m_1 m_2 (p m_2 + 2d + m_1)} \\
    \tau_{\text{B}} &= \frac{2p d m_2 + p m_1 m_2 + d m_1 + m_1^2}
    {p m_1 m_2 (p m_2 + 2d + m_1)} \\
    \tau_{\Gamma} &= \frac{2p d m_2 + p m_1 m_2 + d m_1 + m_1^2}
    {p m_1 m_2 (p m_2 + 2d + m_1)} \\
    \tau_{\text{A}} &= \frac{2p^2 m_2^2 + 2p d m_2 + 2p m_1 m_2 + d m_1}
    {p m_1 m_2 (p m_2 + 2d + m_1)} \\
    \tau_{\Psi} &= \frac{2p^2 m_2^2 + 2p d m_2 + d m_1 + m_1^2}
    {p m_1 m_2 (p m_2 + 2d + m_1)} \\
    \tau_{E} &= \frac{2p^2 m_2^2 + 2p d m_2 + d m_1 + m_1^2}
    {p m_1 m_2 (p m_2 + 2d + m_1)} \, .
\IEEEyesnumber
\end{IEEEeqnarray}

Substituting these results for the individual reset times into (\ref{eq:tass}), we obtain:
\begin{equation}
\displaystyle
     \mathcal{T}   = \frac{1}{m_1} + \frac{1}{2 p m_2} \, .
\end{equation}
In particular, if $m_1 = m_2$, we obtain:
\begin{equation}
\displaystyle
   \mathcal{T}   = \frac{1}{m} + \frac{1}{2pm} \, .
\end{equation}

\balance
\end{document}